\newtheorem{theorem}{Theorem}
\newtheorem{lemma}{Lemma}
\newtheorem{corollary}{Corollary}
\newtheorem{definition}{Definition}
\begin{document}
\title{Superconcentration on a Pair of Butterflies}
\author{William F. Bradley}
\maketitle
\bibliographystyle{plain}

\begin{abstract}
 Suppose we concatenate two directed graphs, each isomorphic
  to a $d$ dimensional butterfly (but not necessarily identical to
  each other). Select any set of $2^{k}$ input and $2^{k}$ output
  nodes on the resulting graph.  Then there exist node disjoint paths
  from the input nodes to the output nodes.  If we take two standard
  butterflies and permute the order of the layers, then the result
  holds on sets of \emph{any} size, not just powers of two.
\end{abstract}

This paper will examine some problems in node-disjoint circuit switching.
The motivating problem can be described as follows.  
Suppose we have a directed graph with $N$ input
and $N$ output nodes, both labelled from 1 to $N$.  For each input node $v$,
we choose an output node $\pi(v)$ to be its destination, for some permutation
$\pi$.  The problem is to find a collection of $N$ node-disjoint paths 
which each run from $v$ to $\pi(v)$ for all $v$.  A directed graph that can
route all permutations $\pi$ is called \emph{rearrangeable}.
(For some real-world applications of
node-disjoint routing, see, for example,~\cite{optical_networks}.)

A classic example of rearrangeability is the Bene\v{s} 
network (see~\cite{Leighton}).  
This network (i.e.\ directed graph) consists of a ``forward'' butterfly 
adjoined to a 
``reversed'' butterfly.  A natural question to ask is: if we attach 
two ``forward'' butterflies, is this network (the 
double butterfly) still rearrangeable?
This problem has been open for several decades.  At least one proof is 
currently under review~\cite{Cam}.  This suggests a more general
hypothesis.  Suppose that we have two graphs, each isomorphic to a
butterfly, but not necessarily identical to each other.  If we attach
the output nodes of the first to the input nodes of the second, is 
the resulting graph rearrangeable?

At the current time, proving this kind of result seems far too much to
hope for.  So, rather than show that these types of networks are rearrangeable,
I will prove various concentration and superconcentration results.
\begin{definition}
Consider a directed graph $G$.  Fix $n$ input nodes and $n$ output nodes.
Suppose that between any $k$ input and $k$ output nodes there exist 
$k$ node-disjoint paths.
(By ``node-disjoint'', I mean that a path intersects
neither itself nor any other path.)  Then we say that $G$ is a
$k$-concentrator.  If $G$ is a $k$-concentrator
for all $k\leq n$, then we call $G$ a \emph{superconcentrator}.
\end{definition}
(Observe that every selected input and output node occurs on exactly one 
path.  Note also that node-disjointness implies edge-disjointness of the paths.
See \cite{Pippenger} or \cite{Hwang} for more on superconcentrators.)

Clearly, rearrangeability implies subset routing-- just choose a permutation
that respects $v\in A$ iff $\pi(v)\in B$.  However, the converse is not true
for arbitrary networks (see Figure~\ref{subset counterexample}).
\begin{figure}[ht]
\centerline{\psfig{file=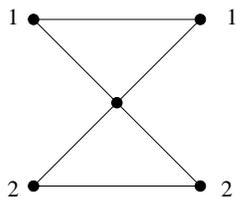,height=1in} \hspace{.2in}  }
\caption{A non-rearrangeable superconcentrator (consider 
$1\rightarrow 2$, $2\rightarrow 1$) }
\label{subset counterexample}
\end{figure}
It's straightforward to show that a single butterfly does not route all 
subsets, so we need to use at least two butterflies to get interesting 
concentration results.

In this paper, I show that any concatenated pair of $d$-dimensional 
butterflies 
(not necessarily identical to each other) are $2^{k}$-concentrators, 
for any $k\leq d$.  I can strengthen this statement in 
a special case:  if the butterflies
are standard butterflies with their layers shuffled (e.g. a Bene\v{s} 
network, or the double buttefly in~\cite{Cam}), the network
is a superconcentrator.

	This paper is structured as follows.  Section 1 establishes some 
definitions and fixes notation for the paper.  Section 2 examines the 
structure of a graph related to a pair of butterflies that highlights
some of its connectivity properties.  Section 3 solves the problem
in the case where $|A|=2^{m}$ for some $m$, and proves a 
rearrangeability-type result when $|A|\leq ^{\lfloor d/2 \rfloor}$
on certain networks.
Section 4 presents the main result, except for one lemma 
that I postpone for section 5.  Section 6 is dedicated to closing remarks.

\section{Definitions and Notation}

Let us begin by defining and fixing notation for a butterfly.
\begin{definition}
A \emph{(standard) $d$-dimensional butterfly} 
is a directed, layered graph defined as follows: 
nodes fall into one of $d+1$ disjoint layers, numbered 0 through $d$. Each
layer consists of $N=2^d$ nodes, which we label with the $N$ binary
strings of length $d$.  (So, a node is specified by a binary string and a
layer number.)
Consider any length $d$ binary string, say $b=b_{1}b_{2}\cdots b_{d}$.
For each $i$ such that $0\leq i <d$,
there is a directed edge from node $b$ of layer $i$ to node 
$b_{1}b_{2}\cdots b_{i-1}0 b_{i+1}\cdots b_{d}$ of layer $i+1$, and 
another directed edge from node $b$ of layer $i$ to node 
$b_{1}b_{2}\cdots b_{i-1}1 b_{i+1}\cdots b_{d}$ of layer $i+1$.
The nodes on layer 0 are called the input nodes, and the nodes on layer 
$d$ are the output nodes.
\end{definition}
A standard $d$-dimensional butterfly can be viewed as a network with 
$2^{d}$ nodes where we switch the first bit in the first layer
of edges, the second bit in the second layer of edges, and so forth.
If we choose to switch the bits in a different order, we get a
\emph{layer-permuted butterfly}.
\begin{definition}
A \emph{$d$-dimensional layer-permuted butterfly} 
is a directed, layered graph defined as follows: 
nodes fall into one of $d+1$ disjoint layers, numbered 0 through $d$. Each
layer consists of $N=2^d$ nodes, which we label with the $N$ binary
strings of length $d$.  Take some (fixed) permutation $\pi$ on $d$
objects.  Consider any such binary string, say $b=b_{1}b_{2}\cdots b_{d}$.
For each $i$ such that $0\leq i <d$,
there is a directed edge from node $b$ of layer $i$ to node 
$b_{1}b_{2}\cdots b_{\pi(i)-1}0 b_{\pi(i)+1}\cdots b_{d}$ of layer $i+1$, and 
another directed edge from node $b$ of layer $i$ to node 
$b_{1}b_{2}\cdots b_{\pi(i)-1}1 b_{\pi(i)+1}\cdots b_{d}$ of layer $i+1$.
\end{definition}
(Note that these butterflies are all graph-isomorphic to each other.)
Finally, the networks we'll be looking at consist of pairs of these
butterflies.
\begin{definition}
Suppose that we have two graphs, $G_{1}$ and $G_{2}$.  Suppose that 
$G_{1}$ has $n$ output nodes and $G_{2}$ has $n$ input nodes, each
numbered from 1 to $n$.  Then
we say that $G$ is the \emph{concatenation} of $G_{1}$ and $G_{2}$ if 
we form $G$ by associating the output node $i$ of $G_{1}$ with 
the input node $i$ of $G_{2}$.  
 
If $G_{1}$ and $G_{2}$ are each isomorphic to a standard butterfly
(but not necessarily identical to each other), we call $G$ 
a \emph{pair of butterflies}.  Similarly, if $G_{1}$ and $G_{2}$ are
layer-permuted butterflies, we have a 
\emph{pair of layer-permuted butterflies}.  Finally, if $G_{1}$ and 
$G_{2}$ are both standard butterflies, we have a \emph{double butterfly}.
\end{definition}
Note that these graphs have $2d+1$ layers of
nodes (0 through $2d$).  Since I imagine the paths from inputs to outputs
to be running from 
left to right, I will refer to the butterfly on layers 0 through $d$ as
the \emph{left butterfly}, and the one on layers $d$ through $2d$ as the
\emph{right butterfly}.  Note also an alternate way of specifying
a pair of butterflies: consider a network consisting of
two standard butterflies, but permute the labels of the output nodes
of the left butterfly.  
Observe that these two definitions give rise to the same class of 
graphs (up to isomorphism).

Over the course of this paper, I construct directed node-disjoint 
paths from input nodes to output nodes.  So, for example, a path from an 
input
node to an output node on a pair of butterflies is exactly $2d+1$ nodes
long-- the path can't double back on the layers.  

Suppose I have a set of input nodes $A$ and a set of output nodes $B$ of the
same size (i.e.\ $|A|=|B|$)
Then if I specify a collection of node-disjoint paths from $A$ to $B$, observe
that I can extend these paths into a consistent setting of all the 
switches in the network.  These switches will induce $N$  node disjoint
paths from \emph{every} input to every output node, and retain the
feature that a path begins in $A$ iff it ends in $B$.  So, on a
switching network, node-disjoint routing of a subset implies there
exists a node-disjoint routing of a permutation $\pi$ such that 
$v\in A$ iff $\pi(v)\in B$.  Since this is an ``if and only if'' statement,
we get the following lemma:
\begin{lemma} \label{set complement}
If we can find node-disjoint paths from $A$ to $B$ on a switching
network, then we can find node-disjoint paths from the complements
$A^{c}$ to $B^{c}$.
\end{lemma}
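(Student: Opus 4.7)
My plan is simply to formalise the sketch given in the paragraph immediately before the statement. I would begin by reinterpreting the given $|A|$ node-disjoint paths from $A$ to $B$ as a partial switch configuration: each node that a path passes through has its local in-edge and out-edge paired by that path, while nodes on no path remain unconstrained. Extending arbitrarily to a complete switch configuration, I would invoke the structural property that on a switching network any complete setting induces $N$ node-disjoint paths, one from each input to a unique output. Call the resulting bijection $\pi$.

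Because the switches on the original paths are preserved during the extension, $\pi$ restricted to $A$ agrees with the given subset routing from $A$ to $B$, so $\pi(A)=B$, and because $\pi$ is a bijection this forces $\pi(A^c)=B^c$. Reading off the subcollection of paths in the full routing that start in $A^c$ then yields the desired node-disjoint paths from $A^c$ to $B^c$, completing the argument.

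The step I expect to carry the real weight is the extension itself: the claim that arbitrarily completing a partial switch configuration cannot destroy node-disjointness. For the butterflies and their concatenations considered in this paper, each node has in-degree and out-degree exactly two and can be viewed as a local $2\times 2$ switch, so a consistent choice of pairing at every such node traces out $N$ well-defined flows from the inputs that, by the regularity of the construction, never collide. I would take this structural fact as a black box (it is what the phrase ``switching network'' is signalling), after which the lemma reduces to the essentially trivial observation that $\pi(A)=B$ if and only if $\pi(A^c)=B^c$.
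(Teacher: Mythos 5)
Your proposal is correct and follows essentially the same route as the paper, whose ``proof'' is precisely the preceding paragraph: extend the partial routing to a consistent setting of all switches, obtain a permutation $\pi$ with $v\in A$ iff $\pi(v)\in B$, and read off the paths starting in $A^{c}$. The structural fact you flag as the load-bearing step (that any complete switch setting on such a network induces $N$ node-disjoint input-to-output paths) is exactly what the paper also takes as given from the definition of a switching network.
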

Throughout this paper, I will use $A$ to represent a collection of
input nodes, $B$ a collection of output nodes, and assume that
$|A|=|B|$.

\section{The Sub-Butterfly Connectivity Graph} \label{connectivity}

Suppose we specify a path of length $m$ on a standard butterfly 
(for $m\leq d$) from an input node.
By choosing which edge to take, the path changes $m$ bits of its
location any way we want.  Suppose we select the first bits to be
$b=b_{1}b_{2}b_{3}\cdots b_{m}$.  Then from layers $m+1$ to $d$, 
the first $m$ bits will remain equal to $b$.  Let's specify 
the resulting sub-graph of the butterfly in the following definition:

\begin{definition}
Consider a $d$ dimensional standard butterfly.  Take an $m$ bit
binary string $b=b_{1}b_{2}b_{3}\cdots b_{m}$  ($m\leq d$).
Consider the sub-graph formed by the nodes on layers $m$ through $d$
(inclusive) whose first $m$ bits are $b$.  Observe that this graph
is (isomorphic to) a $(d-m)$-dimensional butterfly.  Let us call it 
the \emph{sub-butterfly $b*$}.

If we specify a suffix instead and consider layers 0 through $d-m$,
we get the \emph{sub-butterfly $*b$}.

If we have a graph isomorphic to a standard butterfly, the isomorphism
will induce (isomorphic) images of the sub-butterfly, so we can
meaningfully refer to sub-butterflies on any butterfly-isomorphic graph.
\end{definition}
I will be considering sub-butterflies in a pair of butterflies.  In
this context, $b*$ is the sub-butterfly residing on layers $m$
through $d$ (and stopping there), i.e.\ only in the left butterfly.
I'll also be interested in sub-butterflies on the right side.  
These inhabit layers $d$ through $2d-m$.

It will be useful to investigate the structure of the connections between
the $q$-dimensional sub-butterflies on the right and left sides of
a $d$-dimensional pair of butterflies, that is, the sub-butterflies of the
 form $x*$ or $*x$ where $x$ is a binary string of length $m$ (such that
$m+q=d$).  Note that these sub-butterflies inhabit layers $m=d-q$ through
 $d$, and $d$ through $d+q$.  Let us represent each 
sub-butterfly by a vertex in a bipartite graph; the vertex is on the left
side of the bipartite graph iff the sub-butterfly is on the left
side of the pair of butterflies.  I will label each vertex by its associated
sub-butterfly, abusing the label notation somewhat.  
Place an edge between two vertices $x*$ and $*y$ iff
the two sub-butterflies are connected, that is, iff $x*$ and $*y$
(as sub-butterflies) share at least
one common node on layer $d$ of the pair of butterflies.
Equivalently, there is an edge between the nodes in the bipartite graph
iff there exists a path from every layer $d-q$ input node 
of $x*$ to every layer $d+q$ output node of $*y$. 
I will refer to this graph as the \emph{$q$-dimensional 
sub-butterfly connectivity graph},
or just the \emph{connectivity graph}.
Observe that there are $2^{m}$ vertices
on either side of this graph.  How are the vertices connected?

I will consider progressively more specialized cases in order to derive 
various results in later sections.  Suppose, first, that we build a bipartite
connectivity graph, but if there are $x$ common nodes on layer $d$ 
between a sub-butterfly on the left and one on the right, 
we insert $x$ edges (instead of only 1 edge).  Let us call this the 
\emph{enriched connectivity graph}.  

\begin{lemma} \label{enriched graph is regular}
For any pair of butterflies, its enriched connectivity graph is regular.
\end{lemma}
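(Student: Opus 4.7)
The plan is to identify the enriched degree of each vertex with the number of layer-$d$ nodes in the corresponding sub-butterfly, and to show this is the same constant on both sides.

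First, I would fix $q$ and set $m=d-q$, and focus attention on layer $d$ of the pair of butterflies (the ``middle'' layer, which is simultaneously the output layer of the left butterfly and the input layer of the right butterfly). Each left sub-butterfly $x*$ is a $q$-dimensional sub-butterfly whose intersection with layer $d$ has exactly $2^q$ nodes, and likewise for each right sub-butterfly $*y$. The crucial observation I would make next is that the collection $\{x* : x \in \{0,1\}^m\}$ \emph{partitions} the $2^d$ layer-$d$ nodes into $2^m$ blocks of size $2^q$: this holds in the standard butterfly because the $x*$ are indexed by the first $m$ bits (which are fixed on layers $m$ through $d$), and then transports through the isomorphism defining the left butterfly. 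The same argument applies on the right, so $\{*y : y \in \{0,1\}^m\}$ also partitions the $2^d$ layer-$d$ nodes into blocks of size $2^q$.

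With these two partitions in hand, the degree count is immediate. The number of edges incident to a fixed left vertex $x*$ in the enriched connectivity graph is, by definition,
\begin{equation*}
\sum_{y \in \{0,1\}^m} \bigl|(x*) \cap (*y) \text{ on layer } d\bigr|.
\end{equation*}
Because the sets $(*y)$ partition layer $d$, this sum telescopes to the total number of layer-$d$ nodes in $x*$, which is $2^q$. The identical argument with the roles of the two sides swapped shows that every right vertex $*y$ has enriched degree $2^q$ as well. Hence the enriched connectivity graph is $2^q$-regular.

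There is essentially no technical obstacle here; the only step requiring any thought is verifying that sub-butterflies induce a partition of layer $d$, and even this is a direct consequence of the definition once one observes that ``first $m$ bits equal $x$'' and ``last $m$ bits equal $y$'' give partitions in the standard butterfly, which then pull back through the (arbitrary) butterfly-isomorphism defining each half of the network. So the lemma reduces to a clean double-counting argument using the two partitions of layer $d$.
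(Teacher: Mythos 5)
Your argument is correct and is essentially the paper's own proof, spelled out in more detail: the paper simply notes that each sub-butterfly has $2^q$ nodes on layer $d$ and that this count equals the enriched degree, which is exactly your double-counting via the two partitions. The extra care you take in verifying that the sub-butterflies on each side partition layer $d$ is a worthwhile elaboration but not a different approach.
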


\begin{proof} Since each sub-butterfly has $2^{q}$ output nodes, then all 
nodes in the enriched connectivity graph have degree $2^{q}$.
\end{proof}

Now we move our attention to the special case of layer-permuted butterflies.
First, let us analyze the structure of one connected compnent of
the $q$-dimensional connectivity graph.

\begin{lemma} \label{connected components}
Each connected component in the connectivity graph of a layer-permuted
butterfly is a completely connected bipartite graph.
\end{lemma}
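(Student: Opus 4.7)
The plan is to realize each sub-butterfly, restricted to the middle layer $d$ of the pair, as the set of $d$-bit strings satisfying a constraint on a certain set of bit positions, and then observe that an edge in the connectivity graph corresponds to a simple compatibility condition on those constraints.

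First I would identify, for the left layer-permuted butterfly, which bit positions are frozen by the sub-butterfly $x*$. The first $m$ edge-layers switch an $m$-element set $S_L$ of bit positions (read off from $\pi_L$), and the remaining $d-m$ bits are unchanged on layers $m$ through $d$. Hence, by the isomorphism with the standard butterfly, $x*$ on layer $d$ consists of exactly those nodes whose bits at positions in $S_L$ spell out $x$, with the other $d-m$ bits free. Symmetrically, in the right layer-permuted butterfly, the last $m$ edge-layers switch an $m$-element set $S_R$ (read off from $\pi_R$); these are precisely the bits that stay constant on layers $d$ through $2d-m$. So $*y$ on layer $d$ consists of nodes whose bits at positions in $S_R$ spell out $y$.

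Next I would observe that $x*$ and $*y$ share a common middle-layer node iff some $d$-bit string agrees with $x$ on $S_L$ and with $y$ on $S_R$ simultaneously. Setting $T := S_L \cap S_R$, this is jointly satisfiable iff $x$ and $y$ agree on $T$: if they do, any extension to the bits outside $S_L \cup S_R$ yields a common node; if they disagree, no common node can exist.

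Finally I would partition both sides of the bipartite graph by the restriction to $T$. For each assignment $t : T \to \{0,1\}$, the $2^{m-|T|}$ left vertices with $x|_T = t$ are each adjacent to the $2^{m-|T|}$ right vertices with $y|_T = t$, and to nothing outside this class. So each such class is exactly one connected component, and it is a complete bipartite graph. The only care needed is in tracking which bits are frozen by a sub-butterfly through the isomorphism to the standard butterfly; once $S_L$ and $S_R$ are correctly read off from $\pi_L$ and $\pi_R$, the rest is immediate from the compatibility criterion.
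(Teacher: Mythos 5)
Your proposal is correct and follows essentially the same route as the paper: identify the bit positions frozen by each sub-butterfly via the layer permutations, observe that $x*$ and $*y$ share a layer-$d$ node iff $x$ and $y$ agree on the common frozen positions, and conclude that the components are the classes of a common restriction, each a complete bipartite graph. Your version is somewhat more carefully spelled out (the paper's stated compatibility condition has some index slippage), but the underlying argument is identical.
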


\begin{proof}
Suppose that the layer-permuted butterfly on the left has permutation
$\pi$, and the butterfly on the right has permutation $\sigma$.
Consider a sub-butterfly $b*$ in the left butterfly.  This corresponds to 
a sub-graph on layers $m$ through $d$ where the value of bit $\pi(i)$ 
is $b_{i}$.  Notice that a sub-butterfly $b*$ in the left butterfly 
connects to 
a  sub-butterfly $*c$ in the right butterfly if and only if
\begin{equation} \label{sub-butterfly connection}
\forall i < q, \forall j > m, \mbox{ if }\pi(i)=\sigma(j) 
\mbox{ then } b_{i} = c_{j} 
\end{equation}
Thus, each connected 
component is a complete bipartite graph (with the same
number of nodes on each side.)
\end{proof}

Next, suppose that we have a pair of layer-permuted butterflies.  
How does the graph change as we specify one more
layer?  That is, if we compare the connectivity graphs between $q$ 
and $q-1$ dimensional sub-butterflies, what happens?

Therefore, determining the structure of the connectivity graph on pairs
of layer-permuted butterflies reduces to determining the connected components.
Consider one connected component in the connectivity graph looking at
$q$-dimensional sub-butterflies.  When we advance to the 
$(q-1)$ dimensional sub-butterflies, each node becomes two nodes (because each
$q$ dimensional sub-butterfly splits into two $q-1$ dimensional 
sub-butterflies).  There are  essentially three cases that can occur.
\begin{itemize} \label{three cases}
\item \textbf{(No reused dimensions)}
Suppose that 
$\sigma(q-1) \neq \pi(j)$ for any $1 \leq j \leq m+1$ and
$\pi(m+1) \neq \sigma(j)$ for any $q-1 \leq j \leq d$.
Then if the $q$-dimensional sub-butterfly $b*$ is adjacent
to $*c$, it follows that $bb_{m_1}*$ is adjacent to 
$*c_{m+1}c$ for $b_{m-1}, c_{m-1} = 0,1$.  

In the connectivity graph, that means that the connected component doubles
the number of nodes, but remains completely connected.
\item \textbf{(One reused dimension)}
Suppose that there exists (exactly) one $i$ such that either 
\begin{itemize}
\item
$\sigma(q-1) = i = \pi(m+1)$, or
\item
$\sigma(q-1) = i = \pi(j)$ for some $1 \leq j \leq m+1$ and
$\pi(m+1) \neq \sigma(k)$ for any $q-1 \leq k \leq d$, or
\item
$\sigma(q-1) \neq \pi(j)$ for any $1 \leq j \leq m+1$ and
$\pi(m+1) = i = \sigma(k)$ for some $q-1 \leq k \leq d$
\end{itemize}
Then the connected component splits into two connected components,
based on the value of the $i$th bit.
\item  \textbf{(Two reused dimensions)}
Suppose that $\sigma(d-m-1) = i = \pi(j)$ for $1 \leq j \leq m+1$ and
$\pi(m+1) = l = \sigma(k)$ for $d-m-1 \leq k \leq d$, and $i \neq l$.
Then the connected component splits into four connected components, based
on the four possible values that the $i$ and $l$ bits can take.
\end{itemize}

\section{Subsets of size $2^m$}
We want to select a collection of node-disjoint paths from input set $A$
to output set $B$ on a pair of butterflies.
Although I've expressed this problem in terms of paths, it's often easier
to express the proof in terms of packets travelling through the network.
In particular, if packets travel forward (node disjointly, and
without stopping) from every input node in $A$,
and backwards from every output node in $B$, and we can match up the 
packets on level $d$, then the paths traced by the packets give us the
collection of paths we're looking for.  I will switch between the path
and packet descriptions of the problem whenever it seems helpful.
\begin{lemma} \label{simple splitting}
Suppose we have a set $A$ of input nodes on a butterfly.
By passing from layer 0 to layer 1 of a butterfly, there exist
paths that send
$\lceil |A|/2 \rceil$ of the packets to sub-butterfly $0*$, and 
$\lfloor |A|/2 \rfloor$ of the packets to sub-butterfly $1*$.
Similarly, we could send 
$\lceil |A|/2 \rceil$ of the packets to sub-butterfly $1*$, and 
$\lfloor |A|/2 \rfloor$ of the packets to sub-butterfly $0*$.
Mutatis mutandi, this applies to packets in output nodes
travelling backwards, by passing from layer $2d$ to $2d-1$.
\end{lemma}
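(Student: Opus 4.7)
The plan is to reduce the splitting problem to a simple counting exercise based on how pairs of input nodes can collide at layer $1$. Each input node $b=b_{1}\cdots b_{d}$ at layer $0$ of a standard butterfly has exactly two out-edges, one to the node $0b_{2}\cdots b_{d}$ (which lies in sub-butterfly $0*$) and one to the node $1b_{2}\cdots b_{d}$ (which lies in $1*$). So each packet must be assigned one of these two targets, and node-disjointness amounts to ensuring the chosen layer-$1$ nodes are all distinct. (For a layer-permuted butterfly the bit switched is $\pi(1)$ rather than the first bit, but the argument is identical after relabeling.)

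The key observation is that two distinct input nodes $b, b' \in A$ have overlapping target sets at layer $1$ if and only if they share the suffix $b_{2}\cdots b_{d}$, which forces $b$ and $b'$ to be the \emph{pair} $0b_{2}\cdots b_{d}$ and $1b_{2}\cdots b_{d}$. So I partition $A$ according to the suffix $b_{2}\cdots b_{d}$ and note that each class has size $1$ or $2$. Let $s$ be the number of size-$2$ classes and $t$ the number of size-$1$ classes, so $|A| = 2s+t$.

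For each size-$2$ class the two packets are \emph{forced} to occupy the two common target nodes, one in $0*$ and one in $1*$; this contributes $s$ packets to each side and is automatically node-disjoint from every other class, since distinct classes target disjoint node pairs at layer $1$. For each size-$1$ class the single packet can be sent freely to either side. Thus after the forced assignments I have $s$ packets on each side and $t$ free packets. To reach a split of $\lceil|A|/2\rceil$ and $\lfloor|A|/2\rfloor$, I need to add $\lceil t/2\rceil$ free packets to one side and $\lfloor t/2\rfloor$ to the other, which I accomplish by arbitrarily partitioning the $t$ singletons accordingly. Swapping the two sides gives the dual statement. The backward version (from layer $2d$ to $2d-1$) follows by reversing the direction of the edges, which turns the right butterfly into an isomorphic copy of a butterfly with inputs at layer $2d$, and repeating the argument verbatim.

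No real obstacle arises here; the only thing to be careful about is the forced-pair bookkeeping, making sure that ``forced'' assignments are consistent with node-disjointness across distinct classes (which they are, since distinct suffixes give disjoint target pairs at layer $1$) and that the residual free-choice count $t$ has the right parity to land on $\lceil|A|/2\rceil,\lfloor|A|/2\rfloor$, which is immediate from $|A|=2s+t$.
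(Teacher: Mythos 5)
Your proof is correct and is essentially the paper's own argument: your suffix classes of size one or two are exactly the paper's ``switches'' $\{0t_2\cdots t_d,\,1t_2\cdots t_d\}$, with full switches forced to split evenly and singletons used to absorb the rounding. Nothing further to add.
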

\begin{proof}
The $N$ nodes on the first layer of the butterfly can be grouped into 
$N/2$ switches, where the nodes labelled $T_{0}=0t_{2}t_{3}\cdots t_{d}$
and $T_{1}=1t_{2}t_{3}\cdots t_{d}$ form one switch.  Observe that 
each switch can be set straight or crossed, that is, 
we have to send $T_{i}$ to $T_{i}$ on the next layer 
(for $i=$ both 0 and 1), or $T_{i}$ to $T_{1-i}$.  Setting switches in one of 
these two states guarantees that paths are node-disjoint, so I 
will always set them accordingly. 

	For all the switches such that $T_{0},T_{1}\in A$, half
of these packets get sent to $0*$, and half to $1*$.  If 
$T_{0},T_{1}\not\in A$, half of these (zero) packets get sent
to each sub-butterfly, too.  Consider all of the remaining
packets.  Each of these is the sole packet in the switch.  So,
by setting $\lceil |A|/2 \rceil$ of the switches to send the
packets to sub-butterfly $0*$, and $\lfloor |A|/2 \rfloor$ of them 
to $1*$, we prove the first part of the lemma.  The rest
follows by symmetry. 
\end{proof}

This lemma allows a surprisingly simple proof of $2^{m}$ concentration.
\begin{theorem} \label{powers of two}
Suppose $|A|=2^{m}=|B|$.  Then there exist node-disjoint paths
from any input set $A$ to any output set $B$ on a pair of butterflies.
(In other words, a pair of butterflies is a $2^{m}$-concentrator.)
\end{theorem}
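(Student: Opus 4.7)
My plan is to push $A$'s packets forward through $m$ layers of splitting so that exactly one ends up in each left sub-butterfly of dimension $q := d-m$, push $B$'s packets backward for $m$ steps so that exactly one lies in each right sub-butterfly, and then marry the two families via a perfect matching in the enriched connectivity graph, meeting at distinct nodes on layer $d$.

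First I would apply Lemma \ref{simple splitting} recursively. At layer $0$, the lemma sends exactly $2^{m-1}$ packets of $A$ into sub-butterfly $0*$ and $2^{m-1}$ into $1*$ via node-disjoint switch settings. Each sub-butterfly is itself a butterfly of one lower dimension, so the lemma applies inside it; after $m$ iterations every sub-butterfly $b*$ with $|b|=m$ holds exactly one $A$-packet on layer $m$, and the paths constructed so far on layers $0$ through $m$ are node-disjoint. Symmetrically, working backward from layer $2d$ on the right butterfly, I place exactly one $B$-packet on layer $2d-m$ in each right sub-butterfly $*c$ with $|c|=m$.

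Next, invoke Lemma \ref{enriched graph is regular}: the enriched $q$-dimensional sub-butterfly connectivity graph is a bipartite multigraph that is $2^q$-regular. It is a classical consequence of Hall's marriage theorem (equivalently, of K\"onig's edge-coloring theorem) that every regular bipartite multigraph admits a perfect matching; fix such a matching $M$. By the definition of the enriched graph, each edge of $M$ is a specific layer-$d$ node shared by the two sub-butterflies it joins, and distinct edges of $M$ correspond to distinct layer-$d$ nodes.

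Finally, for each edge of $M$ pairing $b*$ with $*c$ at layer-$d$ node $v$, I route the unique $A$-packet of $b*$ forward through the left sub-butterfly to $v$ and the unique $B$-packet of $*c$ backward through the right sub-butterfly to $v$. Because only one packet ever sits inside any given sub-butterfly, any single input-to-output path inside that sub-butterfly works. Node-disjointness of the full collection follows from three observations: distinct left (resp.\ right) sub-butterflies are vertex-disjoint on layers $m$ through $d$ (resp.\ $d$ through $2d-m$); the splitting paths of Lemma \ref{simple splitting} on the outer $m$ layers are already disjoint by construction; and the only layer where paths from different sub-butterfly-pairs could possibly collide is layer $d$, where the matching $M$ hands out distinct nodes. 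The one step with genuine content is the matching, and Lemma \ref{enriched graph is regular} is tailored to deliver it, so the main obstacle is really just recognizing that regularity of the enriched graph is exactly what welds the independent left and right halves of the argument together.
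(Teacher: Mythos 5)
Your proposal is correct and follows essentially the same route as the paper: recursive application of Lemma \ref{simple splitting} for $m$ steps on each side to isolate one packet per $(d-m)$-dimensional sub-butterfly, then a perfect matching in the regular (enriched) connectivity graph via Hall's theorem, then free routing inside each sub-butterfly. Your extra observation that distinct matching edges correspond to distinct layer-$d$ nodes is a welcome bit of added care at the seam, but it does not change the argument.
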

\begin{proof}
Consider the left butterfly.
We can apply Lemma~\ref{simple splitting} recursively for $m$ steps.  On
step 1, we split $A$ so that $2^{m-1}$ packets go to $0*$ and 
$2^{m-1}$ go to $1*$.  Since $0*$ and $1*$ are themselves $d-1$ 
dimensional butterflies, we can apply the lemma again, on each
of them, giving us 4 sub-butterflies, each with $2^{m-2}$ 
paths.  After $m$ steps, we end up with $2^{m}$ sub-butterflies
(which is all of the $d-m$ dimensional sub-butterflies),
each of which has exactly 1 packet.  Now, on each of these butterflies,
we can send the packet along any path we want for the remainder
of the left butterfly (i.e.\ until we hit layer $d$); since it's the only 
packet on its sub-butterfly, there's no possibility of any other 
packet's path crossing its own.

We can perform the same construction on the output packets in $B$,
moving backwards toward the input layer.  When we reach layer $2d-m$,
there will be 1 packet per sub-butterfly.

At this point, observe that the sub-butterfly connectivity graph 
determines the connections between these butterflies.
By Lemma~\ref{enriched graph is regular},
this graph is a regular bipartite graph.
By Hall's theorem, there exists a 
perfect matching.  
This matching in the connectivity graph implies a matching in the
set of sub-butterflies, which implies a matching between the (unique)
packets in each sub-butterfly.  By construction of the connectivity
graph, there exists a path (not necessarily unique) between matched
packets.  As observed above, these paths are node-disjoint,
so we're done. 
\end{proof}

\subsection{Some Corollaries}

We get a very short corollary:
\begin{corollary}
Suppose $|A|=|B| = 2^{d} - 2^{m}$.  Then there exist node-disjoint paths
from $A$ to $B$ on a pair of butterflies.
\end{corollary}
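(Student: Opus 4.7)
The plan is quite short: the corollary is immediate from the combination of Theorem~\ref{powers of two} and Lemma~\ref{set complement}. The pair of butterflies has $N = 2^d$ input nodes and $2^d$ output nodes, so if $|A| = |B| = 2^d - 2^m$, then the complements satisfy $|A^c| = |B^c| = 2^m$.

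First, I would apply Theorem~\ref{powers of two} to the complementary sets $A^c$ and $B^c$, since these have cardinality $2^m$, a power of two. This gives node-disjoint paths from $A^c$ to $B^c$. Next, I would invoke Lemma~\ref{set complement}, which says that a node-disjoint routing from one subset to another on a switching network implies a node-disjoint routing of the complements. Applying the lemma to the routing from $A^c$ to $B^c$ yields a node-disjoint routing from $(A^c)^c = A$ to $(B^c)^c = B$, which is exactly what we need.

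There is no real obstacle here; the only thing to check is that the pair of butterflies qualifies as a "switching network" in the sense required by Lemma~\ref{set complement}, i.e.\ that the argument about extending a subset routing to a full permutation routing goes through. This is immediate because each layer of the pair of butterflies is composed of $2\times 2$ switches, and once a node-disjoint path uses one input and one output of a switch, the remaining input and output can be paired without conflict, extending any partial routing to a complete one. Hence the corollary follows.
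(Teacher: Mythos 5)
Your proposal is correct and is exactly the paper's argument: apply Theorem~\ref{powers of two} to the complements $A^{c}$ and $B^{c}$, which have size $2^{m}$, and then use Lemma~\ref{set complement} to transfer the routing back to $A$ and $B$. The extra remark that a pair of butterflies is a switching network built from $2\times 2$ switches is a reasonable sanity check but not something the paper spells out.
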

\begin{proof}
Use Lemma~\ref{set complement} and Theorem~\ref{powers of two} on 
the complements of $A$ and $B$. 
\end{proof}

We can use Theorem~\ref{powers of two} to give us information about
a kind of rearrangeability on sufficiently small input and output sets.
\begin{corollary} \label{mini-rearrangeability}
Suppose we have a pair of $d$-dimensional butterflies.
Suppose that there is a path between each node on layer 
$\lfloor d/2 \rfloor$ (in the left butterfly) and each node
on layer $2d - \lfloor d/2 \rfloor$ (in the right butterfly).
Then if we select any input set $A$ and output set $B$ with
$A=B \leq 2^{\lfloor d/2 \rfloor}$,  and any permutation $\rho$ from
$A$ to $B$, there exists a collection of node-disjoint paths
from $A$ to $B$ such that for every $a \in A$, the path from $a$
ends at $\rho(a)$.
\end{corollary}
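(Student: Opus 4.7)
The plan is to mimic the architecture of Theorem~\ref{powers of two}---split from both sides down to one packet per sub-butterfly, then match the packets across layer $d$---but this time I would track the identity of every packet so as to respect the prescribed permutation $\rho$.

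Concretely, set $m=\lfloor d/2\rfloor$ and apply the recursive splitting of Lemma~\ref{simple splitting} (as in the proof of Theorem~\ref{powers of two}) to route $A$ forward from layer $0$ to layer $m$. Because $|A|\le 2^m$, after $m$ rounds each $(d-m)$-dimensional left sub-butterfly contains at most one packet, so each $a\in A$ is alone in some sub-butterfly $\alpha(a)=x_a*$ at a specific layer-$m$ position $p_a$, with $\alpha$ injective. Symmetrically, running the lemma backwards from layer $2d$ places each $b\in B$ alone in a right sub-butterfly $\beta(b)=*y_b$ at some position $q_b$, with $\beta$ injective. These two outer phases already produce node-disjoint paths by the splitting lemma.

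Now I would join the halves across layer $d$. The hypothesis says every layer-$m$ left node reaches every layer-$(2d-m)$ right node, and any such path crosses layer $d$ at a shared node, so $\alpha(a)$ and $\beta(\rho(a))$ share at least one layer-$d$ node $v_a$. The \emph{key observation} is that the $v_a$'s are automatically distinct: each layer-$d$ node lies in exactly one left sub-butterfly (determined by its first $m$ bits under the isomorphism) and exactly one right sub-butterfly, and because $\alpha$, $\beta$, and $\rho$ are all injective the pairs $(\alpha(a),\beta(\rho(a)))$ are pairwise distinct. Inside $\alpha(a)$---which carries only the packet from $a$---route freely from $p_a$ to $v_a$, and inside $\beta(\rho(a))$ route freely from $v_a$ to $q_{\rho(a)}$; concatenating with the outer splitting paths yields an $a\to\rho(a)$ path. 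Node-disjointness of the full collection is then automatic: the outer splits are disjoint, different sub-butterflies on either side are vertex-disjoint, each sub-butterfly carries only one packet, and the $v_a$'s are distinct. The main obstacle I anticipate is bookkeeping rather than combinatorics: unlike in Theorem~\ref{powers of two}, the permutation is externally specified, so I must check that no counting constraint is violated---but $|A|\le 2^m$ keeps sub-butterflies uncrowded and the bijective correspondence between layer-$d$ nodes and sub-butterfly pairs keeps the $v_a$'s distinct, so no Hall-type matching argument is required.
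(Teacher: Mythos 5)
Your proposal is correct and follows essentially the same route as the paper: recursively apply Lemma~\ref{simple splitting} for $\lfloor d/2\rfloor$ rounds on each side until every $\lceil d/2\rceil$-dimensional sub-butterfly holds at most one packet, then exploit the hypothesis that every left sub-butterfly meets every right sub-butterfly to connect the packet of $a$ to that of $\rho(a)$. You merely make explicit two points the paper leaves implicit --- that distinct pairs $(\alpha(a),\beta(\rho(a)))$ yield distinct layer-$d$ crossing nodes, and that $|A|\le 2^{\lfloor d/2\rfloor}$ can be handled directly without padding by dummy packets --- both of which check out.
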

\begin{proof}
If the corollary holds when $A=B = 2^{\lfloor d/2 \rfloor}$, then,
by using dummy packets to make up the difference, the corollary 
holds for $A=B \leq 2^{\lfloor d/2 \rfloor}$.  So, suppose that 
$A=B = 2^{\lfloor d/2 \rfloor}$.  We can use the same argument in
Theorem~\ref{powers of two} to split the packets until there is one
packet on each $\lceil d/2 \rceil$ dimensional sub-butterfly.
By the assumption in the corollary, the resulting connectivity graph
is a complete bipartite graph on \emph{all} nodes, so we can select
node-disjoint paths between the path originating at any $a$ and
send it to the path terminating at $\rho(a)$.
\end{proof}

Note that if we have a pair of standard butterflies, the corollary holds.
Also, suppose we have a pair of layer-permuted butterflies.
Suppose further that we insist that
\begin{itemize}
\item  if $i \leq \lfloor d/2 \rfloor$, then 
$\pi(i) \leq \lfloor d/2 \rfloor$ (where $\pi$ is the left layer permutation)
 on the left butterfly, and
\item
 if $i \geq \lceil d/2 \rceil$, then 
$\sigma(i) \geq \lceil d/2 \rceil$ (where $\sigma$ is the right layer 
permutation) on the right butterfly.
\end{itemize}
(In other words, we permute the layers but don't send any layer from the
left half of the butterfly to the right half.)
Then Corollary~\ref{mini-rearrangeability} holds.

\section{The General Case}

Proving node-disjoint 
subset routing for an arbitrary input and output set (of the same 
size) is somewhat more challenging.  However, 
for pairs of layer-permuted butterflies,
the same basic approach
from Theorem~\ref{powers of two} works.  Looking at the proof, there are
two parts: first, we split the packets into a number of sub-butterflies,
until we have one packet per sub-butterfly.  Then, we view the 
problem as an
exact matching problem on a particular bipartite graph, and show that
a matching exists.  

The proof for the general case runs the same way.  In order to find 
a matching, it's clearly necessary that each connected component
of the bipartite connectivity graph has as many packets 
on the left side as on the right.  
In the next section, I'll prove that this condition (roughly speaking)
is sufficient for the existence of a matching on the 
connectivity graph.
But assuming for now that it holds, we can prove the main result:
\begin{theorem} \label{main}
For any input set $A$ and output set $B$ on a pair of $d$-dimensional 
layer-permuted butterflies,
such that $|A|=|B|$, there exist node-disjoint paths from $A$ to
$B$.
\end{theorem}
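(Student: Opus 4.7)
The plan is to extend the approach of Theorem~\ref{powers of two} to arbitrary $|A|$ by maintaining a \emph{balance invariant} throughout the recursive splitting: after $m$ rounds, in the $(d-m)$-dimensional sub-butterfly connectivity graph, every connected component contains as many forward packets (on its left vertices, sitting in layer $m$ of the left butterfly) as backward packets (on its right vertices, sitting in layer $2d-m$ of the right butterfly). Once this invariant is carried all the way to $m=d$, the connectivity graph becomes a bipartite graph on the layer-$d$ nodes (now singleton sub-butterflies), each of its components is balanced, and the matching lemma postponed to Section~5 converts it into the required node-disjoint paths.

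The base case $m=0$ is trivial: the connectivity graph has a single edge and $|A|=|B|$ packets on each side. For the inductive step, I would split each sub-butterfly's packets into its two children via Lemma~\ref{simple splitting} on both the forward and the backward side, and then track how the connectivity graph evolves using the three cases (no, one, or two reused dimensions) listed after Lemma~\ref{connected components}. In the ``no reused dimensions'' case the connected component merely doubles in size but remains a single complete bipartite piece, so any choice of splits preserves balance. In the ``one reused dimension'' case the component splits by a bit $i$ into two new components; on at least one of the left and right sides, $i$ is the current refinement bit, and there one uses the freedom in Lemma~\ref{simple splitting} to route just the right number of packets to the ``bit $i=0$'' children so that the two new components each end up balanced against the side on which $i$ is already fixed. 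The ``two reused dimensions'' case is analogous with two bits and four new components; when the refinement bit is free on both sides, the choices on each side are made jointly to match.

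The main obstacle is verifying that the required bit-$i$ split can always be realized at the per-sub-butterfly level. Lemma~\ref{simple splitting} only guarantees an almost-even split within a sub-butterfly, but its proof actually allows more: every switch with both endpoints active contributes exactly one packet to each child, while every singly-occupied switch can be steered either way. One must show that these per-sub-butterfly freedoms, aggregated over all the sub-butterflies inside a connected component, collectively suffice to realize the aggregate bit-$i$ counts demanded by the other side, and that no parity obstruction arises in the symmetric sub-cases where $i$ is the refinement bit on both sides. This may require either a mild strengthening of Lemma~\ref{simple splitting} or an explicit algorithm that couples the splits across the whole component.

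If the invariant is successfully maintained for $d$ rounds, each forward packet occupies a unique node on layer~$d$ on the left (and the backward packets occupy unique nodes on the right); the connectivity graph becomes the bipartite graph of layer-$d$ coincidences, balanced on every component, and the Section~5 matching lemma delivers a packet-to-packet pairing whose induced paths are node-disjoint by construction. The technical heart of the proof is thus the balance bookkeeping described above; the matching step itself is deferred.
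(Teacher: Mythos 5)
Your high-level strategy---recursive splitting plus a balance invariant on the connected components of the connectivity graph---matches the paper's, but the step you flag as ``the main obstacle'' is in fact the crux of the theorem, and it is not closed by a mild strengthening of Lemma~\ref{simple splitting}. The difficulty is worst exactly where you gloss over it: in the ``one reused dimension'' case with $\sigma(q-1)=\pi(j)$ for some $j\le m$, the bit being fixed on the right was fixed on the left many levels earlier, so the left-hand distribution over that bit is already locked in by old routing choices and is not controlled by your invariant; meanwhile the right-hand side's achievable splits are constrained to the interval determined by its doubly-occupied switches (a sub-butterfly whose packets happen to pair up in switches is forced into an exactly even split). With arbitrary packet counts per sub-butterfly there is no reason these two constraints are compatible, so the invariant cannot be pushed through level by level as stated.

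The paper sidesteps this by never facing the general-distribution case. It inducts on the binary expansion of $|A|$: writing $|A|=b_m\cdots b_1$, it picks $A'\subseteq A$, $B'\subseteq B$ with $|A'|=|A|-2^m$, routes $(A',B')$ by induction, and retains only the \emph{rounding decisions} (which child receives the surplus packet at each odd sub-butterfly). A parity argument ($t$ and $t+2^{m-k}$ have the same parity for $k<m$) shows the same rounding decisions are legal for $(A,B)$, and after $m$ steps every level-$m$ sub-butterfly holds exactly $1$ or $2$ packets, with each connected component automatically balanced because the $(A',B')$ instance matched and the components have equally many nodes on each side. Only then does it need a balance-preserving split---Lemma~\ref{matching}---whose counting argument ($a_1+a_2=b_1+b_2$, $a_i,b_i\le 2^{m-1}$, and $2^{m-1}-b_i$ free packets available) depends essentially on the $1$-or-$2$ structure and does not generalize to arbitrary occupancies. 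Also note the paper stops the recursion at level $m+1$ (where each sub-butterfly has $0$ or $1$ packets and complete bipartiteness of each component finishes the job), rather than descending to $m=d$ as you propose. As written, your argument has a genuine gap at its central step.
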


\begin{proof}
If $|A| = 2^{d}$, then we are done, by (for example) 
Theorem~\ref{powers of two}.  So throughout, we can assume that
$|A| < 2^{d}$.
Suppose that, in binary, $|A|=b_{m}b_{m-1}\cdots b_{1}$, where
$m\leq d$.  I will prove the lemma by induction on $m$.
The exact statement that I will be inducting on is: 

\emph{
Over the course of $m+1$ steps, 
we can recursively split the packets over the sub-butterflies, 
so that if sub-butterfly $x*$ has $p$ packets in it, then $x0*$ will
have $\lceil p/2 \rceil$ or $\lfloor p/2 \rfloor$ packets, and $x1*$ will
have $\lfloor p/2 \rfloor$ or $\lceil p/2 \rceil$ packets, respectively.
The same holds on the right butterfly.
(There will then be 0 or 1 packets in each sub-butterfly on level $m+1$
and level $2d-m-1$).  We can then select a matching between the 
sub-butterflies giving us node-disjoint paths from $A$ to $B$.
}

First, the base case: if $m$=0 or 1, then we are done (by 
Theorem~\ref{powers of two}).

Next, the inductive step.  Fix $m$ and assume the theorem holds
for all $m'<m$.  Let us try to reproduce the proof of 
Theorem~\ref{powers of two} with $2^{m+1}>|A| \geq 2^{m}$ 
packets to see where complications arise.  If $|A|\neq 2^{m}$, then
we will not be able to divide the packets evenly in half at every
sub-butterfly for $m$ steps.  A sub-butterfly $x*$ may have an odd number of
packets, so we must send the ``extra'' packet either to $x0*$ or $x1*$.
I will refer to this choice (the ``0'' or ``1'') as the \emph{rounding
decision}.  Note that there is no actual packet that is distinguished
as the ``extra'' one-- there's just a surplus of one more packet that
either goes to $x0*$ or $x1*$.  But it's helpful to imagine that one
of the packets is the extra one when describing the paths.

Choose $A' \subseteq A$ and $B' \subseteq B$ such that
$|A'|=|B'|=|A|-2^{m}$.   Let $m'$ be the integer such that 
$|A'|=b_{m'}b_{m'-1}\cdots b_{1}$ (so, $m'<m$).  
By induction, we can find node-disjoint paths
from $A'$ to $B'$.  The information that we keep from the
induction is not the actual paths themselves.
Instead, we keep the rounding decisions that every sub-butterfly
makes.  Note that even after
we've split $A'$ until there's only 1 packet per sub-butterfly,
we are still splitting with an extra packet; it's just that if
$p=1$, then $\lceil p/2 \rceil =1$, and $\lfloor p/2 \rfloor=0$.
Hence, the almost exact recursive splitting part of the inductive
hypothesis holds not just for the first $m'$ steps, but for the
first $m$ steps.  We need to keep this rounding
information, too.

Consider, now, the original sets $A$ and $B$.  Using 
Lemma~\ref{simple splitting} recursively for $m-1$ steps
on the right and left
butterflies, we can send the ``extra'' packet on each sub-butterfly
the same way on level $k<m$ as we did when routing $A'$ to $B'$.  
To do this, we need to know that the same sub-butterflies have an
odd number of packets in them.
Observe that if a sub-butterfly on level $k$ that has $t$ packets in it
in the $(A', B')$ case, then it  
has $t+2^{m-k}$ in the $(A,B)$ case.  
As long as $k<m$, then
$t$ and $t+2^{m-k}$ have the same parity; therefore, extra packets
exist in the same sub-butterflies.  When we reach step $m$, all
the $m$-level sub-butterflies that had one packet in them in the $(A',B')$
case now have 2 packets, and all the sub-butterflies that had no
packets now have 1.

We shift now to the matching problem on the sub-butterfly connectivity graph.
Consider one connected component of the connectivity graph.
Every node on the left hand side represents a sub-butterfly with
one or two packets on it, as does
every node on the right hand side.  By induction, the total number of
packets on each side is the same.  (If they weren't, the packets in
the $(A',B')$ case couldn't match up.)
Using Lemma~\ref{matching} of the 
next section, we can split the packets over level $m$ (and $2d-m$)
to get 0 or 1 packet per sub-butterfly, with the same number on the 
LHS and RHS of each of the connected components.  By 
Lemma~\ref{connected components} each component is completely connected,
and we're done. 
\end{proof}

Note that since we're using the direction of the ``extra'' packet, rather
than any particular path, the actual packets going into the upper or
lower sub-butterflies are not necessarily the same between the $(A',B')$
case and the $(A,B)$ case.  In particular, $A'$ will not necessarily still
be routed to $B'$.

\section{The Matching Lemma}
\begin{lemma} \label{matching}
Suppose we have a pair of $d$-dimensional layer-permuted butterflies.
Consider its $q$-dimensional
sub-butterfly connectivity graph, where the sub-butterflies
reside on layer $m=d-q$, and $2d-m$.  Suppose
that each node has 1 or 2 packets
on it.  Finally, assume that there are the same number of packets
on the LHS and the RHS of each connected component.

Then, when passing from the $q$-dimensional connectivity graph to the 
$(q-1)$-dimensional connectivity graph, we can send each packet to
a different sub-butterfly such that each connected component has
the same number of packets on the LHS and the RHS.
\end{lemma}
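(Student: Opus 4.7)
My plan is to prove Lemma~\ref{matching} by case analysis according to the three cases enumerated after Lemma~\ref{connected components}: no reused dimensions, one reused dimension, and two reused dimensions. In each case I will track how a single old connected component $C$ of the $q$-dimensional connectivity graph fragments into new connected components of the $(q-1)$-dimensional graph, and verify that the rounding choices allowed by Lemma~\ref{simple splitting} can be coordinated so that every new component is still balanced.

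The first case is essentially automatic: $C$ becomes a single complete bipartite component with twice as many nodes on each side, and since $\lceil p/2 \rceil + \lfloor p/2 \rfloor = p$ the total packet count on each side of $C$ is preserved, so balance is inherited without any coordination.

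In the one-reused-dimension case, $C$ splits into two new components according to the value of a single newly common bit. I would let $N$ be the number of left (equivalently right) nodes of $C$ landing on a given side of the split, and for $u \in \{0, 1\}$ let $L_u$ (resp.\ $R_u$) denote the total packets forced to the $u$-side of the left (resp.\ right) half of $C$. The hypothesis that every node holds 1 or 2 packets yields $L_u, R_u \in [N, 2N]$. The number of 1-packet nodes whose single packet I send to the $u = 0$ side is the free parameter; consistency of the two balance equations follows from $L_0 + L_1 = R_0 + R_1$, and the range $L_u, R_u \in [N, 2N]$ guarantees an integer parameter choice inside the feasible region.

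The third and hardest case, two reused dimensions, splits $C$ into four new components arranged in a $2 \times 2$ grid indexed by the two newly common bits. The four balance conditions become a $2 \times 2$ transportation problem whose row sums track left-side packets by one bit and whose column sums track right-side packets by the other, subject to entrywise bounds from the ``1 or 2 packets'' hypothesis. With a single scalar degree of freedom, feasibility reduces to checking that an explicit interval---expressible in $L_u$, $R_w$, and $N$---is non-empty, which follows by combining the global balance with the uniform bounds $L_u, R_w \in [N, 2N]$. This case is the main obstacle: four simultaneous inequalities must be juggled, and the slack furnished by the ``each node holds 1 or 2 packets'' hypothesis is exactly what keeps the feasible region from collapsing.
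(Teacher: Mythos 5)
Your proposal follows the paper's proof essentially step for step: the same trichotomy of cases from Section 2, the same distinction between packets whose destination is forced (the paper's ``constrained'' packets from 2-packet nodes or already-fixed bits) and the adjustable single packets, and the same feasibility check resting on the global left/right packet equality together with the bounds coming from the ``1 or 2 packets per node'' hypothesis. Your packaging of the two-reused-dimensions case as a $2\times 2$ transportation problem with one scalar degree of freedom is a tidier way of organizing the paper's explicit inequalities $a_1+a_2=b_1+b_2$, $a_i,b_i\leq 2^{m-1}$, and the three resulting checks, but it is the same underlying counting argument.
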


\begin{proof}
Since the behavior of the two-packet sub-butterflies is determined
(one packet goes to $x0*$, one to $x1*$),
this proof will eventually come down to making the correct 
rounding decision for the sub-butterflies with single packets.

There are three cases we have to consider, reflecting the three
possible behaviors of the connectivity graph as outlined on 
page~\pageref{three cases}.

\textbf{Case 1:  (No reused dimensions)}  If the connected components don't
split between the $q$ and $q-1$ dimensional sub-butterflies,
then the lemma is trivially true.

\textbf{Case 2:  (One reused dimension)}  Suppose that each connected component
splits into two connected components.  Consider one connected component
$C$ in the $q$-dimensional connectivity graph that splits into 
$C_{0}$ and $C_{1}$ in the $q-1$-dimensional connectivity graph.

Suppose that there are
$x$ nodes in $C$ with two packets on them, and $y$ nodes with one 
packet on them.  We must send $x$ packets to $C_{0}$ and $x$ to $C_{1}$
on both the left and the right sides
because the behavior of two-packet sub-butterflies is determined.
We can send the $y$ packets from one-packet nodes to either component;
we simply send
$\lfloor y/2 \rfloor$ to $C_{0}$ and $\lceil y/2 \rceil$ to $C_{1}$
on both the left and the right sides.  Then the lemma holds.

\textbf{Case 3:  (Two reused dimensions)} Suppose that each connected
component in the $q$-dimensional connectivity graph splits into four
connected components, e.g. $C$ splits into $_{0}C_{0}$, $_{1}C_{0}$, 
$_{0}C_{1}$, and $_{1}C_{1}$.  

Let us calculate how many
of the packets from the 2-packet butterflies arrive in each of these
splintered components.  

If we have a sub-butterfly $x*$ on level $m$ with 2 packets in it, then 
we must send exactly 1 packet to $x0*$ and one to 
$x1*$.  I will refer to these packets as \emph{constrained} packets.  
(By contrast, if a sub-butterfly $x*$ on level $m$ has only 1 packet in it,
we can send the packet either to $x0*$ or $x1*$; such a packet is 
a \emph{free} packet.)  We shift our view back to the corresponding
connectivity graph.
Let us label the number of constrained packets on each side of each 
$_{i}C_{j}$.
Observe first of all that because constrained packets come in pairs,
for a fixed $i=0$ or 1, there are as many constrained
packets on the LHS of $_{i}C_{0}$ as of $_{i}C_{1}$, and similarly
as many on the RHS of $_{0}C_{i}$ as of $_{1}C_{i}$.  
Let the number of packets on the LHS of
$_{0}C_{0}$ be $a_{1}$, and the number of packets on the LHS of $_{1}C_{0}$
 be $a_{2}$.
Let the number of packets on the RHS of 
$_{0}C_{0}$ be $b_{1}$, and the number of packets on the RHS of $_{0}C_{1}$
 be $b_{2}$.
See Figure~\ref{constrained}. 
\begin{figure}[ht] 
\centerline{\psfig{file=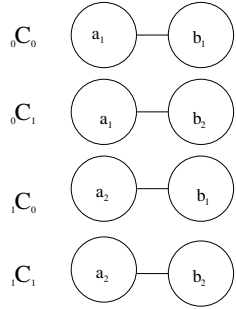,height=1.6in} \hspace{.2in}  } 
\caption{The number of constrained packets} 
\label{constrained}
\end{figure}

Observe that, since each sub-butterfly in layer $m$ has
either one free or two constrained packets, then the number of
packets on the LHS is 
\[\# \mbox{free}_{LHS} + \# \mbox{constrained}_{LHS}=
2^{m}+\frac{1}{2}(\# \mbox{constrained}_{LHS})\]
Since the analogous
equation holds on the RHS, and since the total number of packets are equal,
we get that 
\[2^{m}+\frac{1}{2}(\# \mbox{constrained}_{LHS})=
2^{m}+\frac{1}{2}(\# \mbox{constrained}_{RHS}),\]
so there's the same total number of constrained
packets on the RHS and the LHS.  Therefore, adding up the 
constrained packets in Figure~\ref{constrained} and dividing by two, we get
\[a_{1} + a_{2}=b_{1}+b_{2}\]
Also, any particular $a_{i}$ or $b_{i}$ can't be larger than 
$2^{m-1}$, so 
\[a_{i}, b_{i} \leq 2^{m-1}\]
Due to symmetry, we can assume w.l.o.g. that $a_{1}\geq a_{2}$,
$b_{1} \geq b_{2}$, and $a_{1}\geq b_{1}$.  Putting this together,
we can assume that
\[2^{m-1}\geq a_{1} \geq b_{1} \geq b_{2} \geq a_{2} \geq 0\]

Generally speaking, $a_{i}\neq b_{j}$, so there will not be the 
same number of constrained packets on the RHS and LHS of each
connected component of Figure~\ref{constrained}.  However, we
still have the free packets to allocate.  The situation is as drawn in
Figure~\ref{free 1}.
\begin{figure}[ht] 
\centerline{\psfig{file=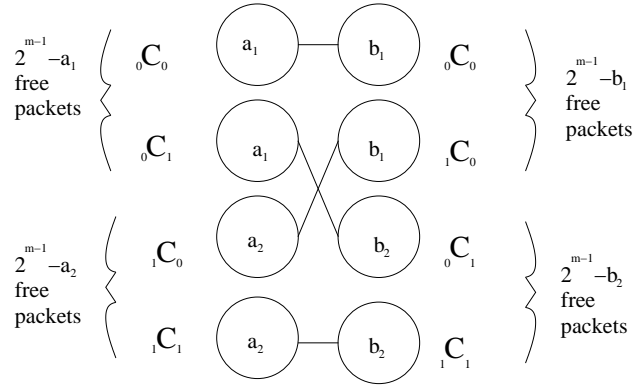,height=2in} \hspace{.4in}  }
\caption{The number of free and constrained packets} 
\label{free 1}
\end{figure}
Since we assume that $a_{1}\geq b_{1}\geq b_{2} \geq a_{2}$, then 
in order to balance the packets on the LHS and RHS, we have to
add packets as in Figure~\ref{free 2}.
\begin{figure}[ht]
\centerline{\psfig{file=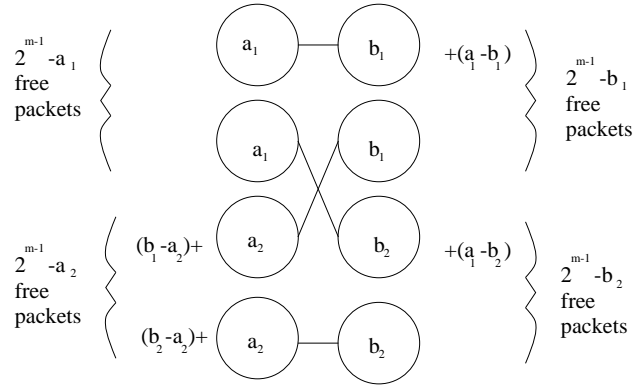,height=2in} \hspace{.4in}  }
\caption{Adding free packets to balance the bipartite graph}
\label{free 2}
\end{figure}
We have to show that there are enough free packets to add.
There are three inequalities to check.  First,
for $_{1}C_{0}$ and $_{1}C_{1}$ on the LHS,
let us calculate how many free packets are required.
\[(b_{1}-a_{2})+(b_{2}-a_{2}) = b_{1}+b_{2}-2a_{2}=a_{1}+a_{2}-2a_{2}
=a_{1}-a_{2}\]
Now, $a_{1}\leq 2^{m-1}$, so we need no more than $2^{m-1}-a_{2}$ 
free packets, which we have.
For the other two cases, (namely $i=0$ and $i=1$), note that
\[a_{1}-b_{i}\leq 2^{m-1}-b_{i}\]
and in each case, there are $2^{m-1}-b_{i}$ free packets.
So, in all cases, we can use a subset of the free packets to
make the total number of packets on the RHS and LHS equal.
Since all the remaining unmatched free packets on the left are
connected to all the unmatched free packets on the right, we can
choose an exact matching to match these packets, send them to
the appropriate connected component,
and we're done.
\end{proof}

\section{Conclusion}
Are all pairs of butterflies superconcentrators?  Or only the layer-permuted
ones?  It's certainly natural to conjecture that the stronger statement is
true.  As a piece of support, Theorem~\ref{powers of two} can be extended
to prove that any pair of butterflies is a $(2^{m}+1)$ concentrator.
Unfortunately, the pathological cases (from unusual butterfly isomorphisms)
make the general analysis more complicated than I could solve.

The concentration and superconcentration 
results in this paper all spring from a splitting
and matching approach.
This method holds out a tantalizing suggestion of a
proof of the rearrangeability of pairs of butterflies.
Theorem~\ref{main} can be viewed as follows: if we number each
input and output node 0 or 1, and have the same number of 
zeroes among the inputs and outputs, we can route a permutation
that sends $0\rightarrow 0$ and $1\rightarrow 1$.  
Suppose we labelled the input and output nodes
0,1,2, or 3, with the same size restraints.  The proofs above seem
likely to apply to this case, too.  If we could just continue
doubling the number of labels up to $N=2^{d}$, we'd have proved
rearrangeability.  Getting the proofs to work for an arbitrary
$2^{m}$ seems pretty challenging, though.

Another natural network to try these methods on is the hypercube.
Typically, rearrangeability on the hypercube requires that each
edge is used at most once, ever, and concerns edge-disjointness, rather
than node-disjointness.  A result analogous to 
Theorem~\ref{main} would be more likely to apply to a hypercube
that uses each edge at most once per time step, but possibly multiple
times over several time steps.  However, edge-disjointness might
be strengthened to node-disjointness.  Unfortunately, the translation
to a hypercube is not trivial.

Proving that a graph is a superconcentrator can also be viewed as a
max flow/ min cut problem; thus, Theorem~\ref{main} can be viewed as 
saying that
for any collection of $k$ input and $k$ output nodes, it is 
necessary to delete at least $k$ edges to prevent any (single-pass)
paths from the input to the output sets.  One might optimistically hope
that these results might translate to other max flow problems, at least
on switching networks.

On a possibly more practical note, it's interesting to observe that the
Theorem~\ref{main} makes use of the size of the 
input set, rather than the set itself (i.e.\ $|A|$, not $A$).  It follows that
once you calculate the rounding decisions for a particular sized
input set, the same rounding decisions solve the problem for all
input sets of the same size.  This also suggests another method for
proving concentration results.

A version of this problem was originally suggested to me by Bruce Maggs, 
in the 
context of routing on faulty butterflies.  I would like to thank him for
many helpful discussion on it.

\bibliography{butterfly}
\bibliographystyle{plain}

\end{document}